\newcommand{\removelatexerror}{\let\@latex@error\@gobble}
\newtheorem{theoremL}{Lemma}
\newtheorem{theoremC}{Assumption}
\newtheorem{theoremT}{Theorem}
\newtheorem{theoremD}{Definition}
\newtheorem{lemma}[theoremL]{Lemma}
\newtheorem{assumption}[theoremC]{Assumption}
\newtheorem{theorem}[theoremT]{Theorem}
\newtheorem{Definition}[theoremD]{Definition}
\def\BibTeX{{\rm B\kern-.05em{\sc i\kern-.025em b}\kern-.08em
    T\kern-.1667em\lower.7ex\hbox{E}\kern-.125emX}}
\title{Is Locational Marginal Price All You Need for Locational Marginal Emission?}
\author{\IEEEauthorblockN{Xuan He\IEEEauthorrefmark{1},
Danny H.K. Tsang\IEEEauthorrefmark{1}, and Yize Chen\IEEEauthorrefmark{2}}

    \IEEEauthorblockA{\IEEEauthorrefmark{1}Hong Kong University of Science and Technology (Guangzhou), China  \quad  \IEEEauthorrefmark{2}University of Alberta, Canada 
    \\xhe085@connect.hkust-gz.edu.cn, eetsang@ust.hk \quad \quad \quad \quad \quad \quad \quad \quad \quad\;  yize.chen@ualberta.ca}}
\begin{document}

\maketitle

\begin{abstract}
Growing concerns over climate change call for improved techniques for estimating and quantifying the greenhouse gas emissions associated with electricity generation and transmission. Among the emission metrics designated for power grids, locational marginal emission (LME) can provide system operators and electricity market participants with valuable information on the emissions associated with electricity usage at various locations in the power network. In this paper, by investigating the operating patterns and physical interpretations of marginal emissions and costs in the security-constrained economic dispatch (SCED) problem, we identify and draw the exact connection between locational marginal price (LMP) and LME. Such interpretation helps instantly derive LME given nodal demand vectors or LMP, and also reveals the interplay between network congestion and nodal emission pattern. Our proposed approach helps reduce the computation time of LME by an order of magnitude compared to analytical approaches, while it can also serve as a plug-and-play module accompanied by an off-the-shelf market clearing and LMP calculation process.
\end{abstract}

\begin{IEEEkeywords}
Locational marginal emission, locational marginal price, critical region, multi-parametric programming.
\end{IEEEkeywords}
\vspace{-1em}
\section{Introduction}
Electricity sector is the largest greenhouse gas (GHG) emission producer, contributing $29\%$ of worldwide emissions annually~\cite{IEA2023}.
The pressing climate change goals require a swift low-carbon transition for power and energy systems. So far, an increasing number of ongoing and future power system projects are taking carbon emission reduction as one of the top priorities \cite{liu2023monitoring,masanet2020recalibrating}. To quantify the impact of carbon reduction operation and planning projects, accurate emission measurement and estimation techniques are necessary for both policymakers \cite{mallapaty2020china, perissi2022investigating, xu2024united} and industries \cite{Google, anthony2020carbontracker}. 

Though there has been a rich literature on quantifying the macro-level or whole network's operational emissions~\cite{li2024review, lamb2021review} and system-level marginal emissions~\cite{holland2022marginal, wang2016estimating}, the focus on localized emission assessments remains limited. The calculation of locational marginal emissions (LME) addresses this gap by quantifying the change in emission by nodal level demand change. An accurate estimation of LME is crucial for monitoring emission statuses, implementing demand-side sustainable management, and formulating effective carbon-reduction policies under network reliability considerations. Currently, PJM and some other system operators have published pioneering records of LME on an hourly basis~\cite{sofia2024carbon, PJM}, while requiring exact knowledge of transmission networks.
\begin{figure}[t]
	\centering
\includegraphics[width=0.8\linewidth]{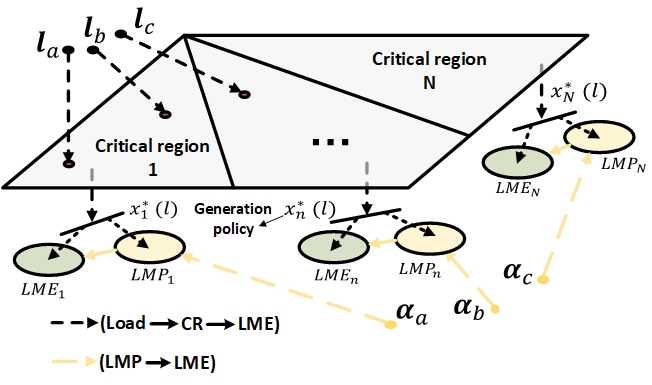}
\vspace{-1em}
\caption{\footnotesize Illustration of the mappings of Load-LME and LMP-LME via critical region projection, which helps us derive LME for given SCED instances.} \label{Framework}
\vspace{-2em}
\end{figure}
In the power system community, there have been rich and fruitful works on analyzing and utilizing the fundamental relationships between the optimal solution of the optimal power flow model (OPF) for the economic dispatch (ED) and locational marginal price (LMP)~\cite{kirschen2018fundamentals}. Given the load inputs, LMP reflects the total cost change associated with the increments of nodal load. Further, LMP can be derived as the dual variables while also reflecting generators' and lines' flow patterns. 
This motivates us to analyze LME based on OPF constraints and operating conditions given a load vector.

In this paper, the load-emission relationship is investigated explicitly based on the underlying security-constrained economic dispatch (SCED) model. More interestingly, by utilizing multi-parametric programming (MPP), we can conveniently partition the load spaces into critical regions~\cite{ji2016probabilistic}, while each region is attached to a specific LMP and a unique congestion pattern. The parametrization of critical regions also helps get the analytical solution of LME, circumventing computationally expensive \emph{finite difference} or \emph{implicit function theorem} implemented by the literature~\cite{valenzuela2023dynamic, ruiz2010analysis}, achieving speedups ranging from 22x to several thousand times. To the best of the authors' knowledge, this work is one of the first to draw the explicit mapping between LMP and LME. Such a connection helps both system operators and energy users to better understand the connections between network operation conditions (e.g., congestion and load change) with nodal marginal emission rates. This LMP to LME mapping is providing an efficient route for multiple entities with acceleration by an order of magnitude. For instance, for independent system operators (ISO) and regional transmission operators (RTO), it provides a plug-in tool for analyzing and revealing real-time emission information. For market participants and generation companies, our method provides an efficient way to find LME signals once LMP is exposed. In addition, our accurate LME estimation helps inform energy end users about their emission patterns associated with electricity usage.

\section{Problem Formulation}
Here we consider an ex-ante LMP model derived from security-constrained economic dispatch. Our model is based on the stylized approach in \cite{ji2016probabilistic}, which captures the process by which independent system operators compute LMP. Then, we can derive LMP and LME via sensitivity analysis.
\subsection{Security-Constrained Economic Dispatch Model}
Specifically, the system operator solves a DC-OPF problem to determine the optimal economic generation adjustments that meet forecasted loads and stochastic generation for the upcoming dispatch interval while satisfying generation and transmission constraints. We assume a connected power network with $n$ buses, $g$ generators, and $m$ lines. Let $\boldsymbol{x} \in \mathbb{R}^g$ be the generation and $\boldsymbol{l} \in \mathbb{R}^n$ be the nodal demand. We follow the typical modeling assumption to model power flows as a DC approximation, where $\mathbf{P}$ denotes the Power Transfer Distribution Factors (PTDF) matrix~\cite{wang2012computational}. $\mathbf{B}$ is a $n$ by $ g$ matrix mapping the generation to each bus. $\overline{\mathbf{f}}$ and $\underline{\mathbf{f}}$ denote the upper and lower bounds of line flow. Then, the SCED problem can be formulated as, 

\begin{subequations}
\label{DCOPF}
    \begin{align}
\min _{\boldsymbol{x}}\quad & \sum_{i=1}^g \;  c_i x_i \label{UC:obj_0}\\
\text { s.t. } \quad  &0 \leq \mathbf{x} \leq \mathbf{\bar{x}}, \label{generation_bound} \\
 &-\overline{\mathbf{f}} \leq \mathbf{P}(\mathbf{B}\mathbf{x}-\boldsymbol{l}) \leq \overline{\mathbf{f}}, \\
 &\sum_{i=1}^{g}x_{i} - \sum_{j=1}^{n}\boldsymbol{\ell}_j =0. \label{UC:balance_0}
\end{align}
\end{subequations}
where $C=\sum_{i=1}^g \;  c_i x_i $ denotes the total generation costs. Note our proposed method can also be applied to generators with quadratic or piecewise-affine costs.

\subsection{LMP and LME via Sensitivity Analysis}
Since LMP is interpreted as the cost of optimally supplying an increment (or decrement) of load at a particular location while satisfying all operational constraints. In the literature, LMP $\alpha_i$ can be calculated by taking the cost derivative
\begin{align} \label{LMP_ori}
\alpha_j = \frac{\partial C}{\partial l_j} = \frac{\partial(\sum_ic_ix_i)}{\partial l_j}=\sum_i(\frac{\partial x_i}{\partial l_j} \cdot c_i);
\end{align}

Similarly, once the optimal solution for the SCED is found, herein, we assume each generator has a fixed, known emission rate $e$ with units of $kg CO_2/MW$ associated with electric power generation. With the total carbon emissions rate $E$ given by $E = \sum_ie_ix_i$. When we treat the demand as an input variable, the output of total emission's sensitivity with respect to demand, namely the LME $\beta_i$, is calculated as
\begin{align} \label{LME_ori}
\beta_j = \frac{\partial E}{\partial l_j} = \frac{\partial(\sum_ie_ix_i)}{\partial l_j}=\sum_i(\frac{\partial x_i}{\partial l_j} \cdot e_i).
\end{align}
Thus, if we can get the analytical representation of $\frac{\partial x_i}{\partial l_j}$, we can directly calculate the LMP with \eqref{LMP_ori} and LME with \eqref{LME_ori}.

\section{Critical Region
Projection-based LMEs}
In this section, we develop the LME policies based on critical region projection (CRP) to project the load region to LMP and LME, which can circumvent computationally expensive calculations for individual load samples. This approach relies on multi-parametric programming (MPP) applied to the SCED model, where loads are treated as uncertain parameters and the load spaces can be partitioned into critical regions (CR). Each region will correspond to a specific pair of LMP and LME. First, we present the compact form of \eqref{DCOPF} to formulate its corresponding MPP model as follows,
\begin{subequations}\label{compact_DCOPF}
\begin{align}  
\min _{\boldsymbol{x}}\quad &\mathbf{c}^T\boldsymbol{x}\\
\mathbf{A}\mathbf{x}&  \leq \mathbf{b}  + \mathbf{F}\boldsymbol{l} \label{LME_MPP:Inequality}\\
\boldsymbol{x} &\in \mathcal{X} \subseteq \mathbb{R}^g \\
\boldsymbol{l} &\in \mathcal{L} \subseteq \mathbb{R}^n 
\end{align}
\end{subequations}
where $\mathbf{A}$ is a ($p \times g$) coefficient matrix for generation variables $\mathbf{x}$ and $\mathbf{F}$ is a ($p \times n$) coefficient matrix for load parameters $\boldsymbol{l}$. $\mathbf{b}$ is a constant vector of dimension $p$. The solution space $\mathcal{X}$ is defined by the set of constraints \eqref{generation_bound}-\eqref{UC:balance_0}, while here we let $\mathcal{L}$ be a predefined polyhedron $(1-\omega)\boldsymbol{l}_0\leq\boldsymbol{l} \leq(1+\omega)\boldsymbol{l}_0$. $\boldsymbol{l}_0$ is the nominal load profile and $\omega$ is a constant representing the operating variation range. By applying MPP, the critical regions within $\mathcal{L}$ can be identified. Within each of these regions, an affine policy maps the vector of load parameters to the corresponding solution for the generation variables.

\subsection{Affine Policy in Critical Load Regions}
Denote the Lagrange multiplier corresponding to each inequality in \eqref{LME_MPP:Inequality} as $\boldsymbol{\lambda} \in \mathbb{R}^p$, according to \cite{pistikopoulos2007multi}, we can utilize the following lemma to describe the relationship between $\mathbf{x}$ and $\boldsymbol{
l
}$ within the critical load regions defined later,
\begin{lemma} \label{lemma_1}
For a feasible $\boldsymbol{l}_\mathbf{c} \in \mathcal{L}$, in the neighborhood of the KKT point [$\mathbf{x}_\mathbf{c}, \mathbf{\lambda}_\mathbf{c}$], a first-order approximation of decision variable $\mathbf{x}$ and the Lagrange multiplier $\boldsymbol{\lambda}$ is,
\begin{align}
\left[\begin{matrix}
\mathbf{x}_\mathbf{c}(\boldsymbol{l})\\
\boldsymbol{\lambda}_\mathbf{c}(\boldsymbol{l})
\end{matrix}\right] = -(\mathbf{M}_\mathbf{c})^{-1}\mathbf{N}_\mathbf{c}(\boldsymbol{l}-\boldsymbol{l}_c)+ \left[\begin{matrix}
\mathbf{x}(\boldsymbol{l}_c)\\
\boldsymbol{\lambda}(\boldsymbol{l}_c)
\end{matrix}\right],
\end{align}
where
\begin{subequations} \label{Gradient_policy}
\begin{align}  
\mathbf{M}_\mathbf{c} &=
\left[
\begin{matrix}
\mathbf{0} & A^{T}_{1} & \cdots &  A^{T}_{p} \\
-\lambda_1A_1 & -V_1 & \cdots & 0 \\
\vdots & \vdots & \vdots & \vdots \\
-\lambda_pA_p & 0 & \cdots & -V_p \\
\end{matrix}
\right], \\
\mathbf{N}_\mathbf{c} &= [Y, \lambda_1F_1,...,\lambda_pF_p],\\
V_i &= A_i\boldsymbol{x}(\boldsymbol{l}_\mathbf{c})-b_i-F_i\boldsymbol{l}_\mathbf{c}.
\end{align}
\end{subequations}
and $Y$ is a null matrix of dimension ($g\times n$).
\end{lemma}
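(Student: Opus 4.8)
The plan is to obtain the claimed affine map as the first-order Taylor expansion delivered by the implicit function theorem applied to the KKT system of \eqref{compact_DCOPF}, with the load vector $\boldsymbol{l}$ playing the role of the parameter. This is the classical parametric-sensitivity argument (Fiacco's basic sensitivity theorem, as used in \cite{pistikopoulos2007multi}) specialized to a linear program.

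First I would write out the KKT conditions of \eqref{compact_DCOPF}. With $L(\mathbf{x},\boldsymbol{\lambda};\boldsymbol{l})=\mathbf{c}^{T}\mathbf{x}+\boldsymbol{\lambda}^{T}(\mathbf{A}\mathbf{x}-\mathbf{b}-\mathbf{F}\boldsymbol{l})$ they read: stationarity $\mathbf{c}+\mathbf{A}^{T}\boldsymbol{\lambda}=\mathbf{0}$; primal feasibility $\mathbf{A}\mathbf{x}\le\mathbf{b}+\mathbf{F}\boldsymbol{l}$; dual feasibility $\boldsymbol{\lambda}\ge\mathbf{0}$; and complementary slackness $\lambda_i(b_i+F_i\boldsymbol{l}-A_i\mathbf{x})=0$ for $i=1,\dots,p$ (taken with this sign to match \eqref{Gradient_policy}; the equality \eqref{UC:balance_0} is handled in the usual way as an always-active row with a free-sign multiplier, which does not affect the argument). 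I would then stack the $g$ stationarity equations and the $p$ complementarity products into a single vector map $\mathbf{G}(\mathbf{x},\boldsymbol{\lambda};\boldsymbol{l})=\mathbf{0}$ of dimension $g+p$; by hypothesis it holds at the KKT point $(\mathbf{x}_\mathbf{c},\boldsymbol{\lambda}_\mathbf{c};\boldsymbol{l}_\mathbf{c})$.

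Next I would differentiate $\mathbf{G}$ there. Since the cost is linear, $\nabla^{2}_{\mathbf{xx}}L=\mathbf{0}$, so the stationarity rows contribute $[\,\mathbf{0}\ \ A_1^{T}\ \cdots\ A_p^{T}\,]$ to $\partial\mathbf{G}/\partial(\mathbf{x},\boldsymbol{\lambda})$ and the null block $Y$ to $\partial\mathbf{G}/\partial\boldsymbol{l}$; differentiating the $i$-th complementarity product contributes $-\lambda_iA_i$ in the $\mathbf{x}$-columns, $-(A_i\mathbf{x}-b_i-F_i\boldsymbol{l})=-V_i$ in its own $\lambda_i$ slot (zeros in the remaining $\lambda$ slots), and $\lambda_iF_i$ in the $\boldsymbol{l}$-columns. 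Assembling these blocks reproduces exactly $\mathbf{M}_\mathbf{c}$ and $\mathbf{N}_\mathbf{c}$ of \eqref{Gradient_policy}. The implicit function theorem at $(\mathbf{x}_\mathbf{c},\boldsymbol{\lambda}_\mathbf{c};\boldsymbol{l}_\mathbf{c})$ then produces a continuously differentiable local map $\boldsymbol{l}\mapsto(\mathbf{x}_\mathbf{c}(\boldsymbol{l}),\boldsymbol{\lambda}_\mathbf{c}(\boldsymbol{l}))$ with Jacobian $-(\mathbf{M}_\mathbf{c})^{-1}\mathbf{N}_\mathbf{c}$ at $\boldsymbol{l}_\mathbf{c}$, and its first-order Taylor expansion about $\boldsymbol{l}_\mathbf{c}$ is exactly the formula asserted in the lemma. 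I would also note that, for a fixed active set, the LP solution is genuinely affine in the right-hand side, so this expansion is in fact exact on the whole critical region, although only the first-order statement is needed here.

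The step I expect to be the crux is verifying that $\mathbf{M}_\mathbf{c}$ is nonsingular — the hypothesis the implicit function theorem requires and, in effect, what carves out the critical region around $\boldsymbol{l}_\mathbf{c}$. I would establish it under the standard non-degeneracy assumptions: linear independence of the active-constraint gradients (LICQ) and strict complementary slackness at $(\mathbf{x}_\mathbf{c},\boldsymbol{\lambda}_\mathbf{c})$. Under strict complementarity the inactive indices have $\lambda_i=0$ and $V_i\neq0$, so $\mathbf{M}_\mathbf{c}$ is block upper-triangular with an already-invertible diagonal block $\mathrm{diag}(-V_i)$ over those indices; the complementary active-index block is square (size $2g$), its only nonzero sub-blocks being $A_\mathcal{A}^{T}$ from stationarity and $-\mathrm{diag}(\lambda_i)_{i\in\mathcal{A}}\,A_\mathcal{A}$ from complementarity, so its determinant equals $\pm\det(A_\mathcal{A})^{2}\prod_{i\in\mathcal{A}}\lambda_i\neq0$ because a non-degenerate LP optimum is a vertex at which exactly $g$ active constraints ($|\mathcal{A}|=g$) have linearly independent rows. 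Finally, continuity of $\mathbf{G}$ and of its derivatives guarantees that the active set — hence the affine policy above — persists on an open neighborhood of $\boldsymbol{l}_\mathbf{c}$, which is precisely the critical load region referred to in the lemma.
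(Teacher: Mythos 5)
Your proposal is correct and follows essentially the same route the paper relies on: the paper gives no proof of Lemma \ref{lemma_1} and simply invokes \cite{pistikopoulos2007multi}, whose argument is exactly the Fiacco-style sensitivity analysis you reconstruct — differentiating the stacked stationarity and complementarity equations to obtain $\mathbf{M}_\mathbf{c}$ and $\mathbf{N}_\mathbf{c}$ and applying the implicit function theorem. Your added verification that $\mathbf{M}_\mathbf{c}$ is nonsingular under LICQ and strict complementarity (and your remark that the affine policy is exact, not merely first-order, on a fixed active set) makes explicit the non-degeneracy hypotheses the paper leaves implicit.
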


\begin{Definition}
The space of $\boldsymbol{l}$ where \eqref{Gradient_policy} holds is defined as a critical region, $CR^\mathbf{c}$, and can be formulated as follows,
\begin{subequations} \label{critical_region}
\begin{align}
CR^R &= \{\Tilde{A}\boldsymbol{x}_\mathbf{c}(\boldsymbol{l})\leq \Tilde{\boldsymbol{b}}+\Tilde{F}\boldsymbol{l}, \Tilde{\boldsymbol{\lambda}}_\mathbf{c}(\boldsymbol{l})\geq0, CR^{IG}\};\\
CR^{\mathbf{c}} &= \nabla{CR^R}.
\end{align}
\end{subequations}
where $\mathbf{\Tilde{A}}, \mathbf{\Tilde{b}}$, and $\mathbf{\Tilde{F}}$ correspond to inactive inequalities, and $CR^{IG}$ denotes a set of linear inequalities defining an initial given region. $\nabla$ is an operator which removes the redundant constraints.
\end{Definition}
Once $CR^{\mathbf{c}}$ has been defined for a solution $[\boldsymbol{x}(\boldsymbol{l_\mathbf{c}}), \boldsymbol{l_\mathbf{c}}]$, we continue to explore the rest of the region, $CR^{rest}$:
\begin{align}
CR^{rest} = CR^{IG}-CR^{\mathbf{c}}.
\end{align}
Then, another set of gradient policies in each of these regions and corresponding CRs are obtained. This procedure terminates when there are no more regions to explore. For the load vectors $\boldsymbol{l}$ from the same CR, they share the following sensitivity between load, cost, and generation:
\begin{align} \label{gradient_policy}
&\left(
\begin{matrix}
\frac{d\boldsymbol{x}(\boldsymbol{l})}{d\boldsymbol{l}} \\
\frac{d\boldsymbol{\lambda}(\boldsymbol{l})}{d\boldsymbol{l}}
\end{matrix}
\right) = -(\mathbf{M}_\mathbf{c})^{-1}\mathbf{N}_\mathbf{c}.
\end{align}

\subsection{Mapping from Load to LME and LMP}
In the electricity market literature, 
the LMP value at each bus is shown to be represented by the sum of the marginal price of generation at the reference bus and the marginal congestion price at the location associated with the active transmission constraints. For each critical region $r$, we formulate a new ($g \times n$) matrix $\mathbf{G}_{r}$ that involves the partial elements associated with the generation variables in \eqref{gradient_policy}. Then, the corresponding LMP can be given as,
\begin{align}  \label{LMP_Gr}
\boldsymbol{\alpha}_r = \mathbf{c}^{T}\mathbf{G}_r.
\end{align}
Similarly, the LME for a critical region can be given as follows,
\begin{align}  
\boldsymbol{\beta}_r = \mathbf{e}^{T}\mathbf{G}_r.
\end{align}
Then, we can project a feasible load vector $\Tilde{\boldsymbol{l}} \in \mathcal{L}$ to the corresponding LMP and LME of its critical region. Specifically, we identify the critical region $\Tilde{r}$ it belongs to using \eqref{critical_region}, which can be given as a mapping denoted by,
\begin{align} \label{LME_Gr}
    \Tilde{r} = CR(\Tilde{\boldsymbol{l}}).
\end{align}
With the index $\Tilde{r}$, we can match the LMP $\boldsymbol{\alpha}_{\Tilde{r}}$ and the LME $\boldsymbol{\beta}_{\Tilde{r}}$ for the load vector $\Tilde{\boldsymbol{l}}$.

\subsection{Mapping from LMP to LME}
By pre-recording the LMP and LME for each critical region, we can accurately determine the LME based solely on the released LMP vector, even without knowing the specific load vector or the corresponding critical region. This procedure's performance can be guaranteed by Theorem \ref{LMP_LME}.
\begin{assumption} \label{Assumption_1}
There will not be two matrices $\mathbf{G}_u$ and $\mathbf{G}_v$, which satisfies $\mathbf{c}^T\mathbf{G}_u=\mathbf{c}^T\mathbf{G}_v$ along with $\mathbf{e}^T\mathbf{G}_u \neq \mathbf{e}^T\mathbf{G}_v$. 
\end{assumption}
\begin{theorem} \label{LMP_LME}
Based on Assumption \ref{Assumption_1}, given the MPP model \eqref{compact_DCOPF} representation of SCED, there exists a unique LME for each LMP, that is
\begin{align} \label{mapping_LMP_LME}
    \boldsymbol{\beta} = \Phi(\boldsymbol{\alpha}).
\end{align}
\end{theorem}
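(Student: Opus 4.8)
The plan is to reduce Theorem \ref{LMP_LME} to a well-definedness statement about a finite relation. First I would recall, from Lemma \ref{lemma_1} together with the Definition of critical regions, the standard consequence of multi-parametric programming applied to \eqref{compact_DCOPF}: the feasible polyhedron $\mathcal{L}$ is covered by a \emph{finite} collection of critical regions $\{CR^r\}_{r=1}^{R}$, and on the interior of each $CR^r$ the optimal generation is an affine function of $\boldsymbol{l}$ whose Jacobian is the generation block $\mathbf{G}_r$ of $-(\mathbf{M}_\mathbf{c})^{-1}\mathbf{N}_\mathbf{c}$ appearing in \eqref{gradient_policy}. Consequently, by \eqref{LMP_Gr} and the LME formula, the attainable $(\text{LMP},\text{LME})$ pairs form the finite set $\mathcal{S}=\{(\boldsymbol{\alpha}_r,\boldsymbol{\beta}_r)\}_{r=1}^{R}$ with $\boldsymbol{\alpha}_r=\mathbf{c}^T\mathbf{G}_r$ and $\boldsymbol{\beta}_r=\mathbf{e}^T\mathbf{G}_r$, and via \eqref{LME_Gr} every feasible load $\tilde{\boldsymbol{l}}$ is assigned the pair indexed by $\tilde r = CR(\tilde{\boldsymbol{l}})$.

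Next I would construct $\Phi$ explicitly and verify that it is single-valued. Define, for each LMP vector $\boldsymbol{\alpha}$ that equals some $\boldsymbol{\alpha}_r$, the value $\Phi(\boldsymbol{\alpha}):=\boldsymbol{\beta}_r$ for an arbitrary choice of such $r$. The only thing to check is independence of this choice: if $\boldsymbol{\alpha}_u=\boldsymbol{\alpha}_v$ for two indices $u,v$, then $\mathbf{c}^T\mathbf{G}_u=\mathbf{c}^T\mathbf{G}_v$, so the contrapositive of Assumption \ref{Assumption_1} yields $\mathbf{e}^T\mathbf{G}_u=\mathbf{e}^T\mathbf{G}_v$, i.e. $\boldsymbol{\beta}_u=\boldsymbol{\beta}_v$. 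Hence the finite relation $\mathcal{S}\subseteq\mathbb{R}^n\times\mathbb{R}^n$ is the graph of a function $\Phi$ on its first projection; combined with the previous step this gives $\boldsymbol{\beta}_{\tilde r}=\Phi(\boldsymbol{\alpha}_{\tilde r})$ for every feasible load, which is exactly \eqref{mapping_LMP_LME}, and the uniqueness of the value is precisely the single-valuedness just established.

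The main obstacle — more a matter of care than of real difficulty — is the treatment of loads lying on the shared facets of adjacent critical regions, where the optimal active set, and therefore $\mathbf{G}_r$, need not be unique. I would handle this either by restricting attention to the interiors of the critical regions (on each of which the active set, hence $\mathbf{G}_r$, is locally constant by Lemma \ref{lemma_1}), so the ambiguous set has measure zero, or by observing that on such a facet any admissible active set still yields one of the finitely many matrices in $\{\mathbf{G}_r\}_{r=1}^R$, so the well-definedness argument above applies either way. A secondary point worth stating cleanly is \emph{why} Assumption \ref{Assumption_1} is the right hypothesis: it is equivalent to demanding that the finite relation $\mathcal{S}$ be functional in the LMP coordinate, which is the necessary and sufficient condition for $\Phi$ to exist as a genuine mapping rather than a set-valued correspondence. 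I would close by noting that $\Phi$ can be tabulated offline from the collected pairs $\{(\boldsymbol{\alpha}_r,\boldsymbol{\beta}_r)\}_{r=1}^R$, which is what makes the released LMP vector alone sufficient to recover LME.
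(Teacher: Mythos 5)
Your proof is correct and takes essentially the same route as the paper: the paper establishes single-valuedness by contradiction (two distinct LMEs for one LMP would force $\mathbf{c}^T\mathbf{G}_u=\mathbf{c}^T\mathbf{G}_v$ with $\mathbf{e}^T\mathbf{G}_u \neq \mathbf{e}^T\mathbf{G}_v$, contradicting Assumption \ref{Assumption_1}), which is exactly the contrapositive step at the heart of your well-definedness check. Your explicit construction of $\Phi$ as a finite lookup over critical regions and your handling of loads on shared facets are more careful than the paper's two-line argument, but they do not change the underlying approach.
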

\begin{proof} Assume that one LMP $\Tilde{\boldsymbol{\alpha}}$ can correspond to different LME $\boldsymbol{\beta}_u$ and $\boldsymbol{\beta}_v$. This indicates $\mathbf{e}^T\mathbf{G}_u \neq \mathbf{e}^T\mathbf{G}_v$ while $\mathbf{c}^{T}\mathbf{G}_u=\mathbf{c}^{T}\mathbf{G}_v= \Tilde{\boldsymbol{\alpha}}$, which is a contradiction to Assumption \ref{Assumption_1}. Thus, Theorem \ref{LMP_LME} can be proved.
\end{proof}
By applying Theorem \ref{LMP_LME}, for the SCED problem \eqref{DCOPF}, we can directly use the LMP to obtain the LME reliably.


\section{Case Study}
In this section, we evaluate the performance of our proposed CRP-based LME identification across various testbeds. We show that our policy achieves superior computational efficiency compared to \emph{implicit function} and \emph{finite difference} methods, while also demonstrating robustness to load perturbations. Furthermore, we illustrate the behavior of LMP and LME in different critical regions and highlight the computational advantage of LMP-LME mapping over Load-LME mapping.
\begin{figure*}[htb]
\vspace{-3em}
	\centering
\includegraphics[width=0.99\linewidth]{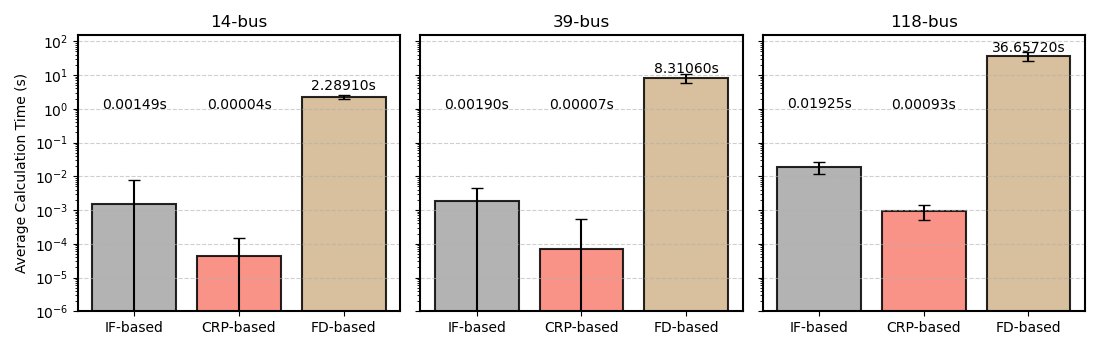}
\vspace{-1em}
\caption{Computation efficiency results. Reported time is averaged over 1000 samples with standard deviation shown as error bars. The y-axis is of logscale.} \label{Comparison_Time}
\end{figure*}
\begin{figure}[tb]
	\centering
\includegraphics[width=0.9\linewidth]{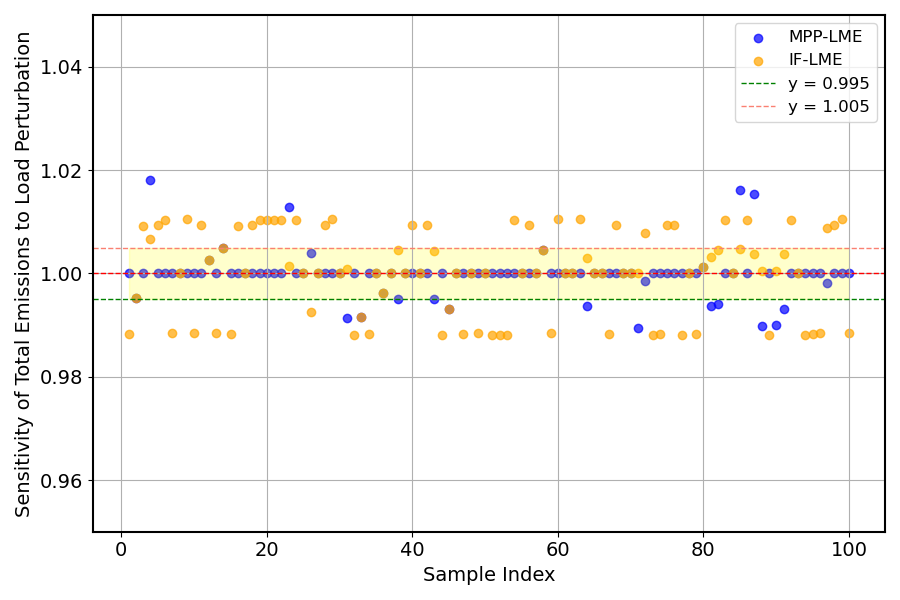}
\vspace{-1.5em}
\caption{Robustness Analysis for 118-bus system.} \label{Sensitivity analysis}
\vspace{-1em}
\end{figure}
\subsection{Benchmark Methods}
The methods used for our comparison are summarized as follows: 
\subsubsection{Implicit Function (IF)-based LME} This method is utilized in \cite{valenzuela2023dynamic}, which applies Lemma \ref{lemma_1} for each load sample to obtain $\frac{d\mathbf{x}(\boldsymbol{l})}{d\boldsymbol{l}}$ to calculate the LME via \eqref{LME_ori}. Note that, in this case, no critical region will be developed.
\subsubsection{Finite Difference (FD)-based LME} This method uses the finite difference as an approximation for the LME of each node as follows,
\begin{align}
\beta_j = \frac{E(\boldsymbol{l}+\Delta \boldsymbol{l}_j)-E(\boldsymbol{l})}{\Delta \boldsymbol{l}_j}. 
\end{align}
where $\Delta \boldsymbol{l}_j$ represents a sufficiently small increment in the load at node $j$.
\begin{table}[tbp]
\centering
\caption{System configurations and the number of CR} \label{MPP_LME: Configurations}
\setlength{\tabcolsep}{0.6mm}{
\begin{tabular}{ccccc}
\hline
Systems & Generator (Type-Number)                               & Load Range ($\omega$)                 & No.CR \\ \hline
14      & Coal-1, NG-2, Wind-1, Solar-1     & \multirow{3}{*}{$\pm 30\%$} & \textbf{4}     \\
39      & Coal-2, NG-4, Wind-2, Solar-2     &                             & \textbf{15}    \\
118     & Coal-11, NG-21, Wind-11, Solar-11     &                             & \textbf{18}    \\ \hline
\end{tabular}
}
\end{table}
\subsection{Simulation Setup}
To verify the effectiveness of the proposed CRP-based LME policy, we perform case studies on the IEEE 14-, 39-, and 118-bus power systems. We simulate the cases where coal, natural gas, wind, and solar generators are used to power the grid and the generators’ carbon emission rates are 1000, 469, 12, and 46 $kg CO_2/MW$, respectively. We conduct the MPP procedure for the SCED problems of all systems and get the corresponding critical regions along with their gradients with respect to the load parameters. The system configurations and the number of critical regions are given in Table \ref{MPP_LME: Configurations}. The load profiles are based on the demand data from 2021/07/01 provided by CAISO \cite{CASIO}.

All simulations have been carried out on an unloaded MacBook Air with Apple M1 and 8G RAM. Specifically, the MPP-related and FD-related optimization problems are modeled and solved using the YALMIP toolbox and MPT3 toolbox in MATLAB R2022b. The IF-related optimization problems are modeled and solved using Julia 1.8.2.

\subsection{Computation Efficiency Analysis}
We use the CRP-based, IF-based, and FD-based methods to obtain the LMEs for unseen load samples and compare their average calculation time as shown in Fig. \ref{Comparison_Time}. The FD-based method takes the longest time in all cases, always thousands of times longer than the other methods, reaching up to 36.7 seconds per sample in the 118-bus system case. This is because it requires solving two SCED problems for each node to compute the finite difference. Compared to the IF-based method, our proposed method achieves significant acceleration, with speedups of 25.2x, 29.5x, and 22.0x for the 14-, 39-, and 118-bus systems, respectively. This demonstrates that mapping the load sample to the specific critical region is much faster than using the implicit function. In most cases, the three methods produce the same LME results. However, in the 39-bus system, the proposed method may yield different results for few samples due to slight deviations in the calculation of generation sensitivity, which are amplified by the emission rates. Following the resulting generation policy ensures valid LMEs, with only a slight difference in total emissions from the optimal value, as the generation policy is nearly identical.
\begin{figure*}[htbp]
\vspace{-3.5em}
	\centering
\includegraphics[width=1\linewidth]{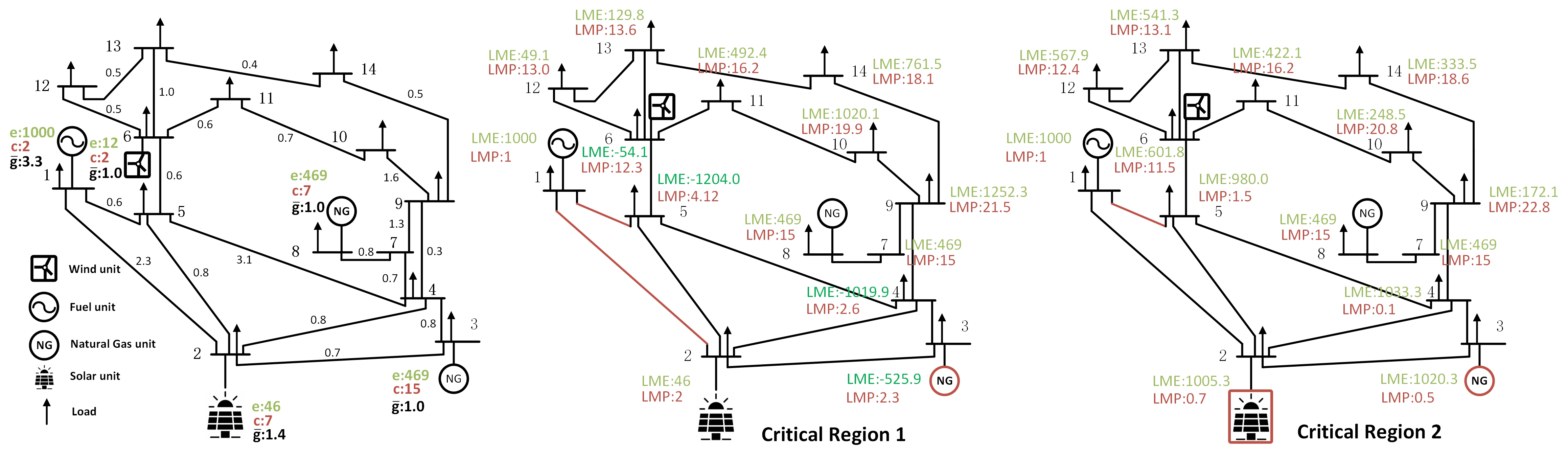}
\vspace{-2em}
\caption{LME and LMP in different critical regions of the 14-bus system. Negative LME are observed in Critical Region 1.} \label{LME_LMP_14}
\vspace{-1em}
\end{figure*}

\begin{figure}[tbp]
	\centering
\includegraphics[width=0.9\linewidth]{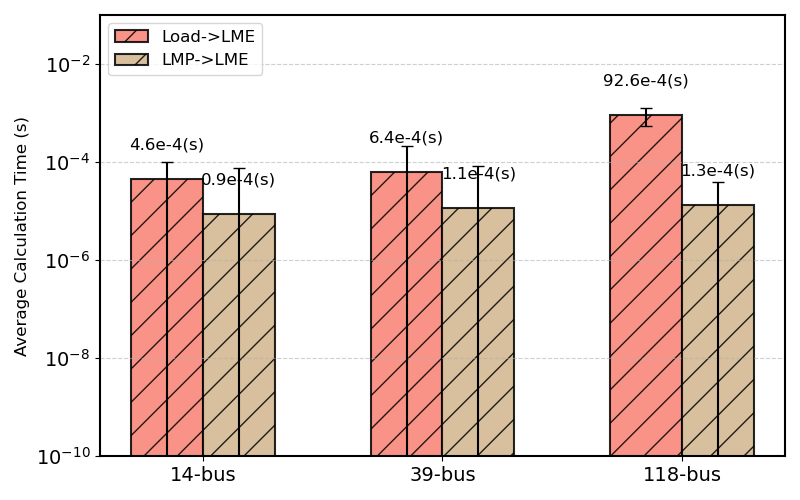}
\vspace{-1.5em}
\caption{Calculation time of Load-based and LMP-based LME mappings.} \label{Comparison_Time_LMP}
\vspace{-1em}
\end{figure}
\subsection{Robustness Analysis on Load Perturbations}
To assess the sensitivity of total emissions to load perturbations, we assume that, regarding the IF method, load samples within the $\pm 10\%$ operating range share the same LME. In contrast, our method allows for quick updates to the LME based on the corresponding critical regions. Then, we apply a $1\%$ perturbation to the load profile and calculate the total carbon emissions using the LME from both the IF-based and our methods. These results are then compared with the actual emissions obtained by solving the SCED models. As shown in Fig. \ref{Sensitivity analysis}, our method and the IF-based method enable 86\% and 42\% of the samples, respectively, to estimate total emissions with an error of less than 0.5\%. This suggests that although our method has the maximal error, likely due to the perturbation causing a shift in the critical region, it remains more robust to load perturbations in most cases. 

\subsection{LME and LMP in Each Critical Region}
Besides the LMEs, we also calculate and record the LMPs for each critical region to realize the mapping from LMP to LME. Here we show two interesting scenarios of the critical regions for the 14-bus system in Fig. \ref{LME_LMP_14}. The LMEs of nodes 3, 4, 5, and 6 are negative since the limits of line 1-2 and line 1-5 are active while the solar unit and wind unit are still surplus. This means that when the load at these nodes increases within critical region 1, congestion may occur on line 1-5 or line 1-2. As a result, other nodes will rely more on generators with lower emission rates based on SCED results, thereby reducing the total carbon emissions and leading to negative LME. In contrast, within the critical region 2 where line 1-2 will not be congested, more generation from coal units tends to be utilized due to their lower cost, resulting in the increase of LMEs in nodes 3,4,5 and 6. Thus, in practice, we can adjust the load profile to change its critical region and LME, and further shift the load to nodes with negative LMEs to reduce total carbon emissions.

\subsection{Comparison Between Load-based and LMP-based LME}
As shown in Fig. \ref{LME_LMP_14}, the LMPs vary across different critical regions. Thus, we can directly use the LMPs of load samples to locate their corresponding critical regions and determine the LMEs. The results in Fig. \ref{Comparison_Time_LMP} demonstrate that LMEs can be obtained within 0.01s for all cases, with the LMP-LME mapping being faster than the Load-LME mapping. Specifically, the acceleration factors are 5.1x, 5.8x, and 71.8x for the 14-bus, 39-bus, and 118-bus systems, respectively. This indicates that locating critical regions by identifying the load vector’s position within the polyhedron is more complex than simply matching the sample’s LMP to its LME.

\section{Conclusion and Future Works}
In this paper, we propose a novel LME policy based on critical region projection (CRP). By solving multi-parametric programming for underlying SCED problems, we can derive specific critical load regions corresponding to the load operating range. Loads within each region share the same generation sensitivity. This sensitivity links the LME to the load region, as well as the LMP to the load region, enabling direct mapping from LMP to LME. In all tested cases with various system configurations, our CRP-based LME calculation is significantly faster than other benchmarked methods, with the LMP-to-LME mapping further accelerating the process compared to load-to-LME mapping. In future work, we will extend this approach to more complex problems, such as SCED models with quadratic costs and dynamic constraints. We are also interested in validating LME calculations by incorporating real-world grid and emission data.
\bibliographystyle{IEEEtran}
\bibliography{bib}
\end{document}